\newcommand{\bbr}{\mathbb{R}}  %black board bold \mathbb{R}
\newcommand{\bbn}{\mathbb{N}}
\newcommand{\fn}{\footnote}
\newcommand{\ci}{\citeasnoun}
\newcommand{\essinf}{{\rm \, essinf~}}
\newcommand{\fil}{\mathcal{F}}
\newcommand{\fcal}{\fil}
\newcommand{\xcal}{\mathcal{X}}
\newcommand{\acal}{\mathcal{A}}
\newcommand{\be}{\begin{equation}}
\newcommand{\ee}{\end{equation}}
\newcommand{\bew}{\begin{equation*}}
\newcommand{\eew}{\end{equation*}}
\newcommand{\1}{{\bf 1}}
\newtheorem{thm}{Theorem}
\newtheorem{defi}[thm]{Definition}
\newtheorem{corollary}[thm]{Corollary}
\newtheorem{rem}[thm]{Remark}
\newtheorem{ex}[thm]{Example}
\begin{document}

%%%%%%%%%%%%%%%%%%%%%%%%%%%%%%%%%%%%%%%%%%%%%%%%%%%%%%%%%%%

\title{Solvency II, or How to Sweep the Downside Risk Under the Carpet}

\author{Stefan Weber\footnote{Institut f\"ur Mathematische Stochastik, Leibniz Universit\"at Hannover, Welfengarten 1, 30167 Hannover, Germany.
e-mail:  {\tt sweber@stochastik.uni-hannover.de}.  \newline The author would like to thank Paul Embrechts, Anna-Maria Hamm, Thomas Knispel, Pablo Koch Medina, Andreas M\"arkert, Michael Schmutz, Ruodu Wang and an anonymous referee for their helpful comments.} \\[1.0ex] \textit{Leibniz Universit{\"a}t Hannover}}

\date{\today
 }

\maketitle

\begin{abstract}
Under Solvency II the computation of capital requirements is based on value at risk (V@R). V@R is a quantile-based risk measure and neglects extreme risks in the tail. V@R belongs to the family of distortion risk measures. A serious deficiency of V@R is that firms can hide their total downside risk in corporate networks, unless a consolidated solvency balance sheet is required  for each economic scenario. In this case, they can largely reduce their total capital requirements via appropriate transfer agreements within a network structure consisting of sufficiently many entities and thereby circumvent capital regulation. We prove several versions of such a result for general distortion risk measures of V@R-type, explicitly construct suitable allocations of the network portfolio, and finally demonstrate how these findings can be extended beyond distortion risk measures. We also discuss why consolidation requirements cannot completely eliminate this problem. Capital regulation should thus be based on coherent or convex risk measures like average value at risk or expectiles. 

\end{abstract}\vspace{0.2cm}
\textbf{Keywords:}  Solvency II, Group Risk, Corporate Networks, Risk Sharing, Distortion Risk Measures, Value at Risk, Range Value at Risk.

\section{Introduction}\label{sec:intro}

Capital requirements are a key instrument in the regulation of financial institutions. Their computation is typically based on two ingredients: stochastic balance sheet projections as a description of a firm's business, and monetary risk measures that capture the normative standards of a regulator. The question which risk measure to use for regulation is the topic of an ongoing discussion between academics and practitioners that began in the mid 1990s. Different properties of monetary risk measures have been suggested, and corresponding classes of risk measures have been identified and characterized.

Most of the scientific literature deals with convex risk measures: convex risk measures assign a lower risk measurement to a diversified position than to the non-diversified positions from which the diversified position is composed. If a convex risk measure is also positively homogeneous, it is subadditive; this property facilitates the delegation of risk limits from the company level to individual departments of the firm. Moreover, convex risk constraints are technically easier to handle in the context of portfolio optimization than non-convex constraints.

As this paper will show, risk measures that are not convex may have additional deficiencies. The European regulatory framework for insurance firms, Solvency II, is built on the non-convex risk measure value at risk (V@R). In the current paper, we define a broader class of risk measures, called V@R-type risk measures, that includes V@R as a special case, and prove that a sophisticated firm can completely circumvent capital regulation that is based on these risk measures, if capital requirements are not also computed on the basis of a consolidated solvency balance sheet. We discuss why such consolidation requirements might not easily be enforced for all potentially relevant entities.  

The main ideas is the following. A firm can adjust its structure over time and form a collection of multiple legally separate entities. We assume that these entities are designed in such a way that they are regulated individually.\footnote{Our results can be adapted to a situation in which only a fraction of the entities is regulated individually.} This might be achieved by designing a sophisticated network of firms that includes suitable entities outside the European Economic Area (EEA) and respects legislation on consolidated accounts, e.g. Directive 83/349/EEC of the \ci{EC_Cons}. The shareholders seek to minimize the total capital required for the operation of the whole business.\footnote{\ci{embrechtsliuwang2016} show in the special case of Range Value at Risk that the solution to this objective equals an equilibrium in an appropriate market. Central control is in this case not required to establish an appropriate solution. We conjecture that this result can be generalized to larger classes of risk measures.}

Assuming that the future net asset value of the company is described by a random variable $E$, a sophisticated firm can create a corporate network of $n$ subentities and split $E$ into $n$ parts. For this purpose, it needs to design suitable legally binding transfer agreements that produce an allocation $(E^1, E^2, \dots, E^n)$ of the total net asset value among the subentities, satisfying $\sum_{i=1}^n E^i = E$. We show that for V@R-type risk measures and sufficiently many subentities the total capital requirement can thereby be reduced to the capital requirement of a corporate network with a deterministic future net asset value of ${\rm esssup~}E$, corresponding to the best case scenario. If the risk measure of one of the subentities is in addition strongly surplus sensitive -- a property that we define in this paper -- and if at the same time unlimited leverage is admissible, then the total capital requirement of the network can be reduced to levels that converge to minus infinity as the leverage of suitable subentities approaches infinity. 

The paper is structured as follows: Section \ref{sec:capital} reviews capital regulation and then recalls the family of distortion risk measures that includes value at risk (V@R), average value at risk (AV@R) and range value at risk (RV@R) as special cases. In Section \ref{sec:group} we first explain in detail the relationship between solvency capital minimization and optimal risk sharing. Second, we describe in the context of distortion risk measures how the total capital requirement of a corporate network can be reduced in order to circumvent capital regulation. Appropriate allocations for the corporate network are  constructed. In the case of distortion risk measures, we provide explicit formulas in terms of mixtures of V@R for the total capital requirement of these allocations. Third, we generalize the main results beyond the case of distortion risk measures. All proofs are collected in the Appendix.

\subsection*{Literature}

Risk sharing for V@R is considered in \ci{Gal10}, for convex distortion risk measures in \ci{Jouini2008}. This paper is most closely related to \ci{embrechtsliuwang2016} who investigate the risk sharing problem for a two-parameter class of quantile-based risk measures, called range value at risk (RV@R). This family, introduced by \ci{Rama}, includes V@R and AV@R as limiting cases. Our paper, in contrast, provides a general picture on risk sharing for V@R-type risk measures -- a notion that is introduced in the current paper -- and includes the main results of \ci{embrechtsliuwang2016} as special cases. A preliminary extension of the results of \ci{embrechtsliuwang2016} can also be found in \ci{Agirman}. Risk sharing of homogenous agents that evaluate risk with a common risk measure is studied in \ci{Wang2016}. The paper considers the self-convolution and subadditive hull of a risk measure and categorizes risk measures according to the size of the regulatory arbitrage that they admit. For further related references on optimal risk sharing we refer to \ci{embrechtsliuwang2016}. Our paper confirms the economic implications that are discussed in \ci{embrechtsliuwang2016}  and \ci{Wang2016}. A description and analysis of corporate groups can be found in \ci{Keller2007}, \ci{Filipovic} and \ci{Haier}.

Some of our key arguments rely on the representation of distortion risk measures as mixtures of V@R, as described in \ci{Dhaene2012}. For reviews on the theory of monetary risk measures, including distortion risk measures, we refer to \ci{foellmer-schied3rd} and \ci{FW15}. For further results on Choquet integrals and distortion risk measures see \ci{Choquet}, \ci{Dhaene06}, \ci{Denneberg}, \ci{Greco}, \ci{Schmeidler}, \ci{Song2006}, \ci{Song2009},  \ci{Song2009459}, and \ci{wang1996}.  The current paper uses the same sign convention for risk measures as \ci{embrechtsliuwang2016}.

\section{Capital Regulation}\label{sec:capital}

\subsection{Solvency II}

A key instrument in the regulation of financial firms such as insurance companies  and banks are solvency capital requirements. Their main role is to provide a buffer for potential losses that protects customers, policy holders and other counterparties. Solvency II is the regulatory framework that applies to European insurance companies. The computation of capital requirements is described in the  Directive 2009/138/EC of the European Parliament and of the Council on
the taking-up and pursuit of the business of Insurance and Reinsurance -- Solvency II (see \ci{EC}):
\begin{quote}
The Sol­vency Capital Requirement should be determined as the economic capital to be held by insurance and reinsurance undertakings in order to ensure that ruin occurs no more often than once in every 200 cases or, alternatively, that those undertakings will still be in a position, with a prob­ability of at least 99.5 \%, to meet their obligations to policy holders and beneficiaries over the following 12 months. That economic capital should be calculated on the basis of the true risk profile of those undertakings, taking account of the impact of possible risk-mitigation techniques, as well as diversification effects.
\end{quote} 

In a stylized manner, these  principles can be formalized as follows: Consider an atomless probability space $(\Omega, \fil, P)$ and a one period economy with dates $t=0,1$. Time 0 will be interpreted as today, time 1 as the one-year time horizon of Solvency II. Suppose that the solvency balance sheet of an insurance firm is available for $t=0,1$, e.g. computed from available data using an internal model. The value of the assets at time $t=0,1$ is denoted by $A_t$. We set $L_t$, $t=0,1$, for the value of the total liabilities to customers and other counterparties, net of the book value of equity. The book value of equity or net asset value (NAV) is then computed as the difference of assets and liabilities, i.e. $E_t = A_t - L_t$, $t=0,1$. Observe that in this situation quantities at time $0$ are deterministic while quantities at time $1$ are random. For simplicity, we neglect the risk-less interest rates over this time horizon.\footnote{For a discussion of this issue, we refer to \ci{christiansen2014}.}

Directive 2009/138/EC states that capital must be sufficient to prevent ruin with probability $99.5\%$ on a one-year time horizon, i.e. $P (E_1 < 0) \leq \alpha$ with $\alpha = 0.5\%$. Setting 
\begin{equation}\label{eq:SCR_V@R}
SCR: = V@R_\alpha (- \Delta E_1)
\end{equation}
for $\Delta E_1 = E_1 - E_0$, we find conditions that are equivalent to the solvency requirement:
\begin{equation}\label{eq:eqSCR_V@R}
P(E_1 < 0)   \leq   \alpha \quad  \Leftrightarrow \quad - E_1 \in \acal _ {V@R_\alpha} \quad \Leftrightarrow \quad SCR \leq E_0 
\end{equation}
where $\acal_{V@R_\alpha} = \{X\in L^\infty: P(X > 0) \leq \alpha\}$ is the acceptance set of $V@R_\alpha$ and $V@R_\alpha(X) = \inf \{m\in \bbr: X-m \in \acal_{V@R_\alpha}\}$. Observe that -- in contrast to \ci{foellmer-schied3rd}, but consistent with \ci{embrechtsliuwang2016} --  we make the convention that the argument of V@R counts losses  positive and profits negative. 
We would like to point out that Directive 2009/138/EC provides an acceptance set for the company's capital at time $t=1$. This is equivalent to verifying that the SCR is less than firm's capital at time $t=0$ where the SCR is computed by the risk measure that corresponds to this acceptance set, evaluated at the random capital increment $- \Delta E_1 $ (due to our sign convention). 

While Solvency II limits the ruin probability at the one-year time horizon  -- corresponding to the acceptance set of V@R -- this specific criterion can easily be replaced by others, i.e. by possibly more desirable acceptance sets. Then the modified SCR must be computed as the corresponding risk measure evaluated at the random capital increment $- \Delta E_1 $. Examples include the Swiss Solvency test and Basel III. Both are based on AV@R, also called expected shortfall, conditional value at risk, tail value at risk, or tail conditional expectation. The next section recalls distortion risk measures that include both V@R and AV@R as special cases. 

\subsection{Distortion Risk Measures}

To begin with, let $\xcal$ be the space of measurable and bounded functions on a measurable space $(\Omega, \fil)$.\footnote{The essential domain of risk measures, defined on larger spaces, relies on each risk measure itself. To keep the presentation simple, we first limit our attention to bounded measurable functions, but explain later -- in Remark~\ref{domains} -- how larger domains may be chosen.}
A risk measure $\rho: \xcal \to \bbr$ is a monotone and cash-invariant function, see \ci{foellmer-schied3rd}, Definition 4.1: 

\begin{enumerate} 
\item  \emph{Monotonicity}:  $\quad \quad X,Y\in \xcal, X\leq Y \; \Rightarrow  \; \rho (X) \leq \rho (Y)$
\item \emph{Cash-Invariance}: $\quad X\in \xcal, m\in \bbr \; \Rightarrow \;  \rho(X+m) = \rho (X) + m $
\end{enumerate}
We use the convention that losses are counted positive and gains negative. A risk measure $\rho$ is normalized, if $\rho(0) = 0$. It is distribution-based, if $\xcal$ is a space of random variables on some probability space $(\Omega, \fil,P)$ and $\rho(X)=\rho(Y)$ whenever the distributions of $X$ and $Y$ under $P$ are equal, i.e. $P^X=P^Y$, for $X, Y \in \xcal$. 

Any risk measure  corresponds to its acceptance set, $\acal = \{ X \in \xcal: \rho(X) \leq 0 \}$, from which it can be recovered as a capital requirement:
$$\rho (X) = \inf \{m \in \bbr: X - m \in \acal\} . $$
Using the notation of the previous section, solvency capital requirements are described as follows: If a regulator requires $- E_1 \in \acal$, this is equivalent to $SCR \leq E_0$ with $SCR := \rho (- \Delta E _1)$.

We will now focus on a specific family of risk measures: distortion risk measures. Some results can be extended beyond this setting, see Section \ref{sec:V@Rtype}.  Distortion risk measures form a subset of the family of comonotonic risk measures. The latter can be expressed as Choquet integrals with respect to capacities. Here we recall the main results that we will need. 

\begin{defi}
Two measurable functions $X, Y$  on $(\Omega, \fil)$ are \emph{comonotonic} if
$$(X(\omega) - X(\omega '))(Y(\omega) - Y(\omega ')) \geq 0 \quad \forall (\omega, \omega') \in \Omega \times \Omega .$$
A risk measure $\rho: \xcal \to \bbr$ is \emph{comonotonic} if 
$$\rho(X+ Y) = \rho (X) + \rho (Y)$$
for comonotonic $X, Y \in \xcal$.
\end{defi}

\begin{rem}
\begin{enumerate}
\item All comonotonic risk measures are positively homogeneous. 
\item V@R and AV@R are comonotonic. \ci{Rama} suggest an alternative to V@R and AV@R, called range value at risk (RV@R), which is a further example of a comonotonic risk measure. Letting $\alpha, \beta >0$ with $\alpha + \beta \leq 1$, they define  
$$RV@R_{\alpha, \beta} (X) = \frac {1}{\beta} \int _{\alpha}^{\alpha + \beta} V@R_\lambda (X) d \lambda $$
for $X\in\xcal$. Like V@R, this is a non convex risk measure with an index of qualitative robustness\footnote{For a precise definition see \ci{Kratschmer2014}.} (IQR) of $\infty$, while AV@R is convex with IQR of 1. Observe that for convex risk measures, the IQR is at most 1. The limiting cases of $RV@R_{\alpha, \beta}$ correspond to $V@R_{\alpha}$ for $\beta \to 0$ and $AV@R_\beta$ for $\alpha \to 0$. 
\end{enumerate}
\end{rem}

\begin{defi}\label{def:choquet}
\begin{enumerate}
\item A mapping $c: \fcal \rightarrow [0,\infty)$ is called a \emph{monotone set function} if it satisfies the following properties:
\begin{enumerate}
\item $c(\emptyset) = 0$.
\item $A, B \in \fcal$, $A\subseteq B$ $\Rightarrow$ $c(A) \leq c(B)$. 
\end{enumerate}
If, in addition, $c(\Omega) =1$, i.e. $c$ is normalized, then $c$ is called a \emph{capacity}.
\item Let $X \in \xcal$. The \emph{Choquet integral} of $X$ with respect to the monotone set function $c$ is defined by
$$\int X dc \; = \;  \int _{-\infty}^0 [c(X>x) - c(\Omega) ] dx  \; + \;  \int _0^\infty c(X>x) dx$$
\end{enumerate}
\end{defi}
The Choquet integral coincides with the usual integral if $c$ is a $\sigma$-additive probability measure. The following characterization theorem can, for example, be found in Chapter 4 of \ci{foellmer-schied3rd}. 
\begin{thm}
A monetary risk measure $\rho: \xcal \to \bbr$ is comonotonic, if and only if there exists a capacity $c$ on $(\Omega, \fil)$ such that 
$$\rho (X) = \int X dc . $$
\end{thm}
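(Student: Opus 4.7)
The plan is to treat the two implications separately and, on each side, reduce matters to the additivity of the Choquet integral on comonotonic inputs.

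For the \emph{if} direction, assuming $\rho(X)=\int X\,dc$ for a capacity $c$, I would verify the three defining properties directly from the definition. Monotonicity is immediate: $X\le Y$ implies $\{X>t\}\subseteq\{Y>t\}$ for every $t$, so monotonicity of $c$ yields $c(X>t)\le c(Y>t)$, and integrating gives $\int X\,dc\le\int Y\,dc$. Cash-invariance follows from the change of variables $t\mapsto t-m$ in both halves of the Choquet integral, with the boundary term absorbed by $c(\Omega)=1$. Comonotonic additivity is the classical fact that I would either take from the Choquet/Schmeidler literature or prove directly by noting that, for comonotonic $X,Y$, there exist $t_1(t),t_2(t)\ge 0$ with $t_1(t)+t_2(t)=t$ and $\{X+Y>t\}=\{X>t_1(t)\}=\{Y>t_2(t)\}$ for each $t$; a one-dimensional manipulation then produces $\int(X+Y)\,dc=\int X\,dc+\int Y\,dc$.

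For the \emph{only if} direction, I would first extract positive homogeneity: since $X$ is comonotonic with itself, comonotonic additivity gives $\rho(2X)=2\rho(X)$, iterated to $\rho(qX)=q\rho(X)$ for rationals $q\ge 0$; monotonicity and cash-invariance jointly imply that $\rho$ is $1$-Lipschitz with respect to $\|\cdot\|_\infty$, extending homogeneity to all $\lambda\ge 0$ and yielding $\rho(0)=0$. Next, define $c(A):=\rho(\1_A)$ for $A\in\fcal$. Monotonicity of $\rho$ transfers to $c$; $c(\emptyset)=\rho(0)=0$; and cash-invariance gives $c(\Omega)=\rho(\1_\Omega)=\rho(0)+1=1$, so $c$ is a capacity.

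It then remains to establish $\rho(X)=\int X\,dc$. For $X\ge 0$ in $\xcal$ I would approximate by simple functions of the form $X_n=\sum_{k=1}^{n}a_{n,k}\1_{\{X>t_{n,k}\}}$ with $0=t_{n,0}<t_{n,1}<\cdots$ of vanishing mesh and suitable $a_{n,k}\ge 0$. The indicators appearing here are pairwise comonotonic because their supports are nested, so comonotonic additivity together with positive homogeneity yield $\rho(X_n)=\sum_k a_{n,k}\,c(X>t_{n,k})$, which is precisely a Riemann--Stieltjes sum for $\int X\,dc$. Passing to the limit via Lipschitz continuity on the $\rho$-side and the standard sandwich estimate on the Choquet integral side gives the identity for $X\ge 0$; cash-invariance then extends it to arbitrary $X\in\xcal$ via $\rho(X)=\rho(X+m)-m$ for $m$ large enough, matched by the corresponding translation on $\int\cdot\,dc$.

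The main obstacle is precisely this final simultaneous convergence: comonotonic additivity only gives exact equality on simple functions, and because $c$ is not countably additive, I cannot invoke dominated or monotone convergence for ordinary measures. Instead I would lean on the fact that $c(\Omega)=1<\infty$, which lets both sides be trapped between upper and lower simple-function approximations whose Choquet integrals pinch together as $\|X-X_n\|_\infty\to 0$, and use Lipschitz continuity of $\rho$ under the sup norm to close the argument.
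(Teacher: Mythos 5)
Your argument is correct and is essentially the standard proof of this characterization (Schmeidler/Denneberg; see Theorem~4.88 and Lemma~4.90 in the cited F\"ollmer--Schied reference), which the paper does not reproduce but simply attributes to Chapter~4 of that book. The only points needing a touch more care in a full write-up are (i) the comonotone-additivity step for sums of more than two nested indicators, which is not a consequence of pairwise comonotonicity in general but works here because every partial sum $\sum_{j\le k}a_{n,j}\1_{\{X>t_{n,j}\}}$ is a nondecreasing function of $X$ and hence comonotonic with the next indicator, and (ii) the level-set identity in the ``if'' direction, where $t_1(t),t_2(t)$ need not be nonnegative for $t<0$ and the set equalities only hold after the usual reduction to nondecreasing continuous $u,v$ with $u+v=\mathrm{id}$, $X=u(X+Y)$, $Y=v(X+Y)$.
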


\begin{rem}\label{distortion}
An important special case are distortion risk measures. In this case, the capacity is defined in terms of a distorted probability measure $P$. The resulting capacity is absolutely continuous with respect to $P$, but typically not additive. 
\begin{enumerate}
\item An increasing function $g: [0,1] \to [0,1]$ with $g(0)=0$ and $g(1) =1$ is called a \emph{distortion function}. If $P$ is a probability measure on $(\Omega, \fil)$, then 
$$c^g (A) : = g(P[A]), \quad A \in \fil ,   $$
defines a capacity.
\item
The corresponding \emph{distortion risk measure} $\rho^g (X) := \int  X dc^g $ is coherent, if and only if $g$ is concave.
\item If an increasing function $g: [0,1] \to [0,\infty)$ with $g(0) = 0$ does not satisfy $g(1)=1$, the equation $c^g  (A) = g(P[A]),$ $A \in \fil$, still defines a monotone set function, but $c^g$ is not normalized. 
\end{enumerate}
\end{rem}
\begin{defi}
Consider the class of  distortion functions $g$ such that 
$$
\begin{array}{ll}
g(x) = 0, & \forall x \in [0,\alpha] \\
g(x) >0,  & \forall x \in (\alpha, 1]
\end{array}
$$
for some $\alpha\in [0,1)$. The number $\alpha$ is called the \emph{parameter} of $g$, and 
$$\hat g (x) = 
\left\{ 
\begin{array}{ll}
g(x+ \alpha), & 0 \leq x \leq 1- \alpha\\
1, & 1- \alpha < x
\end{array}
\right.$$
is the \emph{active part} of $g$. 
If the parameter $\alpha > 0$, then $\rho^g$ is called a \emph{V@R-type distortion risk measure}. 
\end{defi}
V@R, AV@R and RV@R are distortion risk measures. V@R and RV@R are of V@R-type, AV@R is not. This is shown in the Table~\ref{tab:distortion}.

\begin{table}[h]
	\begin{center}
 \begin{tabular}{|l|c|c|c|}
 \hline
 \bf  Risk Measure &   $V@R_\alpha$ &  $AV@R_\beta$  &  $RV@R_ {\alpha, \beta}$  \\ \hline \hline
\bf g (x) ~~~= &  $ 
\left\{ 
\begin{array}{ll}
 0 , & 0 \leq x \leq \alpha\\
1, & \alpha < x
\end{array}
\right.$ &
$ 
\left\{ 
\begin{array}{ll}
 \frac{x}{\beta} , & 0 \leq x \leq \beta\\
1, & \beta < x
\end{array}
\right.$
&
$ 
\left\{ 
\begin{array}{ll}
 0 , & 0 \leq x \leq \alpha\\
 \frac {x-\alpha} {\beta}, & \alpha < x \leq \alpha + \beta \\
1, & \alpha + \beta < x
\end{array}
\right.$ \\ \hline
\bf Type &   V@R-type &   {\bf Not} V@R-type & V@R-type \\ \hline
\end{tabular}
\caption{Distortion functions for the risk measures V@R, AV@R and RV@R for $\alpha, \beta >0$ with $\alpha + \beta \leq 1$. }
\label{tab:distortion}
\end{center}
\end{table}

\begin{rem}\label{rem:repre}
Distortion risk measures can be expressed as mixtures of V@R. For arbitrary distortion functions the precise result is described in \ci{Dhaene2012}. In this paper, we will focus only on  the left-continuous case. 
Let $\rho^g$ be defined as in Remark \ref{distortion} for a left-continuous distortion function $g$, then
$$\rho^g(X) \;  = \;  \int_{[0,1]} V@R_\lambda (X) g(d\lambda).$$
The integral on the right hand side of this equation is a Lebesgue-Stieltjes-integral with respect to the function $g$.

This representation provides an interpretation of the parameter $\alpha$ of the distortion function $g$ of a V@R-type distortion risk measure. The distortion risk measure evaluated at $X\in\xcal$ can be written as $\rho^g(X) = \int_{[\alpha,1]} V@R_\lambda (X) g(d\lambda)$ showing that this risk measurement does not depend on any properties of the tail of $X$ beyond its V@R at level $\alpha$. 
\end{rem}

\section{Network Risk Minimization}\label{sec:group}

Financial institutions are typically owned by shareholders with limited liability. The free surplus that can be distributed as dividends to the shareholders is the NAV less the SCR. Shareholders are thus interested in reducing the SCR via appropriate risk management techniques. Generalizing the results of \ci{embrechtsliuwang2016}, we show that corporate network structures with sufficiently many entities admit a  reduction of the total SCR of the network to the SCR of the best case scenario, if capital regulation is based on V@R-type distortion risk measures. This relies on the assumption that the individual SCRs of the entities are added up to obtain the network's SCR; this means in particular that the networks's SCR is not computed on the basis of a consolidated solvency balance sheet. We discuss this assumption in Remark~\ref{consol}. 

We provide an upper bound for the optimal SCR for any number of entities in the corporate network and explicitly construct a network portfolio allocation that attains this bound. If the active parts of the considered distortion functions are concave, we show that the bound is sharp and the corresponding allocation is optimal. We also prove that, if losses and profits are allowed to be unbounded, the total capital requirement of the network may be reduced to any level, provided that one of the risk measures is strongly surplus sensitive. Finally, we demonstrate that our main results are not limited to the family of distortion risk measures. A reduction of the total SCR to the SCR of the best case scenario is in fact possible for all V@R-type risk measures in corporate networks that consist of sufficiently many entities; a reduction to an arbitrarily small level is admissible under conditions that we will specify.

\subsection{The risk sharing problem of the network}

Consider a financial corporation that consists of $n$ entities that are all individually subject to capital regulation. The corporate network is, however, contractually structured in such a way that it serves the same equity holders.\footnote{As discussed before, \ci{embrechtsliuwang2016} give arguments that this assumption might not be necessary.} Over short time horizons the number of entities $n$ is fixed, but the corporation may adjust its structure over longer time horizons. Suppose that the total consolidated assets and liabilities at times $t=0,1$ are given by $A_t$ and $L_t$, respectively. The total NAV is, thus, given by $E_t = A_t -  L_t$. We set $X = - E_1 = L_1 - A_1$. 

The corporate network now uses at time $t=0$ legally binding transfer agreements to modify the NAVs at time $t=1$. In contrast to \ci{Filipovic}, we do not assume that these transfers are constructed as linear portfolios of a finite family of standardized capital transfer products. Instead we suppose that transfer agreements are contingent claims that admit any reallocation of total capital among the $n$ entities of the network. The resulting allocation will be denoted by $(E^i_1)_{i=1,2, \dots, n}$. We set $X^i = - E^i_1$, $i=1,2, \dots, n$, and observe that 
$$X \; = \;  \sum _{i=1} ^n X^i . $$
We suppose that the solvency capital requirement $SCR^i$ of entity $i=1,2,\dots, n$ is computed on the basis of a risk measure $\rho^i$, i.e.
$$SCR^i = E^i_0 + \rho ^i (X^i),$$
where $E_0^i$ refers to the NAV of entity $i$ at time $0$. It holds that $\sum_{i=1}^n E_0^i = E_0$.
The total solvency capital requirement of the network is thus given by 
$$\sum_{i=1}^n SCR^i = E_0 + \sum _{ i=1}^n \rho^i (X^i) .$$
For a fixed number $n$ of entities  the problem of the corporate network consists in the design of optimal transfers that minimize $\sum _{ i=1}^n \rho^i (X^i) $. We will, in particular, show that for V@R-type risk measures and sufficiently large $n$, the corporate network can find a capital allocation such that 
$$ \sum _{ i=1}^n \rho^i (X^i) \; = \; \essinf X  \; = \; -\;  {\rm esssup} \; E_1, $$
corresponding to the best case scenario.  If one of the risk measures is surplus sensitive (a property that we will define later) and if entities may hold arbitrarily large liabilities, then the total risk can even be made arbitrarily small. 

\begin{rem}\label{consol}
Capital regulation, such as Solvency II, may require the computation of a group SCR on the basis of consolidated data. For example, if a full internal model for the group exists, a solvency balance sheet on the group level is required (see Title III: Supervision of Insurance and Reinsurance Undertakings in a Group, \ci{EC}). Although the entities of the group are legally separate, regulation takes place at the level of the group; this is often referred to as the \emph{legal entity fiction} of group regulation. 

This artificial entity is unproblematic, if capital regulation is based on a coherent risk measure. In this case, the sum of the individual SCRs is always at least as large as the group SCR computed from consolidated data. As a consequence,  coherent risk measures provide corporate groups with an intrinsic incentive to base their analysis on consolidated balance sheets. In contrast, V@R-type risk measures create the opposite incentives.  Solvency II is an example where the SCR is computed by such a risk measure, i.e. V@R. V@R-type risk measures result in a misalignment of regulatory objectives and rational behavior of corporations. They entice insurance firms to explore alternative network structures, not classified as groups. 

Sophisticated corporations could e.g. set up a complex network of companies owned by multiple other entities that are partially located outside the EEA.  If well designed, regulation of parts of these networks might be out of reach 
for EEA supervisory authorities. If legal obligations regarding consolidation (see \ci{EC_Cons}) are properly reflected in the construction of the structures, an application of Article 262 (2) of Directive 2009/138/EC (\ci{EC}) might not be viable.  

But the problems of V@R-type risk measures reach beyond centrally controlled networks. An analysis of market equilibria in the special case of RV@R in \ci{embrechtsliuwang2016} indicates that central governance is not required to produce suitable allocations.\footnote{Future research might attempt to investigate market equilibria for the classes of risk measures that we consider in this paper, i.e. V@R-type distortion risk measures and general V@R-type risk measures.} In summary, the belief that current Solvency II group regulation suffices for general corporate networks does not seem to be well justified.  
\end{rem}

\subsection{Risk sharing for V@R-type distortion risk measures}

To simplify the technical arguments, we work on an atomless probability space $(\Omega, \fil, P)$. This means that a uniform random variable is defined on this space. 

We now consider the optimal risk sharing problem
\begin{equation}\label{eq:riskshare}
 \square_{i=1}^n \rho^i \; (X) \; := \;  \inf \left\{\sum_{i=1}^n \rho^i(X^i): \; \sum_{i=1}^n X^i = X, \quad X^1, X^2, \dots, X^n \in L^\infty  \right\}.
 \end{equation}
The following theorem provides an upper bound to the solution and an allocation that attains this bound. 

\begin{thm}\label{thm:main1}

Let $X\in L^\infty$ and $n\in \bbn$. By $g^1, g^2, \dots, g^n$ we denote left-continuous distortion functions with parameters $\alpha_1, \alpha_2, \dots, \alpha_n \in [0,1)$ and define $d=\sum _{i=1}^n \alpha_i$. We set $\rho^i = \rho^{g_i}$, i.e. $\rho^i$ is the distortion risk measure associated with the distortion function $g^i$, $i=1,2, \dots, n$.  Define the left-continuous functions $$f= \min\{\widehat {g^1}, \widehat {g^2}, \dots, \widehat {g ^n}\}, \quad \quad g(x) =  
\left\{
\begin{array}{ll}
0 , & 0 \leq x \leq d\wedge 1, \\
f( x - d), & d \wedge 1 < x \leq 1
\end{array}
\right.
$$
Note that $g \equiv 0$, if $d\geq 1$. In particular, $g$ is not necessarily a distortion function with $g(1)=1$. We set $V@R_\lambda := V@R_1 = \essinf$ for $\lambda \geq 1$.
\begin{enumerate}
\item There exist $X^1, X^2, \dots, X^n \in L^\infty$ such that $\sum_{i=1}^n X^i = X$ and 
$$ \sum_{i=1}^n \rho^i (X^i)  = \int _{[0,1]} V@R_\lambda (X - \essinf X) g (d\lambda)  + \essinf X.$$
If $d\geq 1$, this equation can be simplified and we obtain 
$$   \sum_{i=1}^n \rho^i (X^i)  = \essinf X .  $$
\item  
The allocation $(X^i)_{i=1,2, \dots, n}$ can be constructed as follows.
Let $$Y \; : = \;   X - \essinf X\;  \geq \; 0.$$ There exists a random variable $U$, uniformly distributed on $[0,1]$, such that $Y= V@R_U (Y)$.
\noindent 
For $i=1,2, \dots, n$, we set
$$ r_i(\lambda ) = 
\left\{
\begin{array}{ll}
1, & i = \inf \{j: \hat g_j (1- \lambda) = f (1-\lambda) \}, \\
0,  & \mbox{else,}
\end{array}
\right.
 $$
 $(\lambda \in [0,1])$ and $R_i(y) = \int _0^y r_i(\lambda) d \lambda$. We define $ \tilde Y = Y \cdot \1_{\{  U \geq d \}}$ and $\tilde Xî = R_i (\tilde Y)$. 
For $i=1,2, \dots, n$, we set
\begin{eqnarray}\label{allocation}
X^i & = &  Y \cdot \1 _{\{\sum_{l = 1}^{i-1 }  \alpha_l \; \leq \;  U  \; <  \; \sum_{l = 1}^{i} \alpha_l \}} \;  + \;  \tilde X ^i \; +  \; \frac{\essinf X }{n}
\end{eqnarray}
If $d\geq 1$, this equation can be simplified and we obtain
\begin{eqnarray}\label{xisimple}
X^i & = & Y \cdot \1 _{\{\sum_{l = 1}^{i-1 } \alpha_l  \; \leq \;  U  \; <  \; \sum_{l = 1}^{i} \alpha_l \}} \;   + \;  \frac{\essinf X}{n}
\end{eqnarray}
\end{enumerate}

\end{thm}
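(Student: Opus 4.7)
The plan is to use cash invariance of the $\rho^i$ to peel off the constant $\essinf X$, and then evaluate $\sum_{i}\rho^i(X^i)$ for the centered allocation by exploiting the disjoint-support structure of the proposed pieces. First I would verify $\sum_i X^i = X$: the atomless hypothesis produces a uniform $U$ with $Y=V@R_U(Y)$; the intervals $A_i:=[\sum_{l<i}\alpha_l,\sum_{l\le i}\alpha_l)$ partition $[0,d\wedge 1)$, so $\sum_i Y\,\1_{A_i}=Y\,\1_{\{U<d\wedge 1\}}$; and the selection property $\sum_i r_i\equiv 1$ gives $\sum_i R_i(y)=y$, whence $\sum_i \tilde X^i=\tilde Y$. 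Adding the constant shifts $\essinf X/n$ recovers $X$, so cash invariance reduces the target identity to evaluating $\sum_i\rho^i(Y\,\1_{A_i}+\tilde X^i)$.

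The key structural observation is that, for each $i$, the two summands $Y\,\1_{A_i}$ (supported on $\{U\in A_i\}\subset\{U<d\}$) and $\tilde X^i=R_i(\tilde Y)$ (supported on $\{U\ge d\}$) live on disjoint events of $\Omega$, hence are trivially comonotonic. Comonotonic additivity of the distortion risk measure $\rho^i$ therefore splits $\rho^i(Y\,\1_{A_i}+\tilde X^i)=\rho^i(Y\,\1_{A_i})+\rho^i(\tilde X^i)$. The first summand vanishes: since $\{Y\,\1_{A_i}>0\}\subset A_i$ has $P$-probability at most $\alpha_i$, one has $V@R_\lambda(Y\,\1_{A_i})=0$ for every $\lambda\ge\alpha_i$; the mixture formula from Remark~\ref{rem:repre}, combined with the fact that $g^i(d\lambda)$ is concentrated on $[\alpha_i,1]$ (because $g^i\equiv 0$ on $[0,\alpha_i]$), yields $\rho^i(Y\,\1_{A_i})=0$.

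The main remaining step, and the principal technical obstacle, is to evaluate $\sum_i\rho^i(\tilde X^i)$ and identify it with $\int_{[0,1]} V@R_\lambda(Y)\,g(d\lambda)$. Since $R_i$ is non-decreasing, $V@R_\lambda(\tilde X^i)=R_i(V@R_\lambda(\tilde Y))$, and a direct quantile calculation from $P(\tilde Y>t)=(P(Y>t)-d)^+$ gives $V@R_\lambda(\tilde Y)=V@R_{\lambda+d}(Y)$, with the convention $V@R_\mu=\essinf$ for $\mu\ge 1$. Substituting into the mixture representation of $\rho^i$, shifting from $g^i$ to its active part via $g^i(\cdot)=\hat{g}^i(\cdot-\alpha_i)$, and applying Fubini to swap the sum over $i$ with the integral, the selection rule for $r_i$---namely that $r_i(\mu)=1$ exactly where $\hat{g}^i(1-\mu)=f(1-\mu)$---collapses the sum into a single integral against $f(d\mu)$. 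A final change of variable $\lambda=\mu+d$ converts this to $\int V@R_\lambda(Y)\,g(d\lambda)$ with the $g$ from the statement. The case $d\ge 1$ is easier: $\tilde Y\equiv 0$ and $g\equiv 0$, so only the cash shifts survive and $\sum_i\rho^i(X^i)=\essinf X$. The delicate bookkeeping throughout concerns the left-continuity of each $g^i$ and the possible atoms of $g^i(d\lambda)$ at the parameter value $\alpha_i$.
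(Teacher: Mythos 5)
Your outline of the bookkeeping (verifying $\sum_i X^i = X$, the quantile relation $V@R_\mu(\tilde Y)=V@R_{\mu+d}(Y)$, Fubini together with the selection rule for $r_i$) matches the paper's proof, but the step you call the ``key structural observation'' is wrong, and it fails at exactly the point where the theorem's mechanism lives. Two nonnegative random variables with disjoint supports are not comonotonic; they are counter-monotonic: if $Y\1_{A_i}(\omega)>0$ and $\tilde X^i(\omega')>0$, then $(Y\1_{A_i}(\omega)-Y\1_{A_i}(\omega'))(\tilde X^i(\omega)-\tilde X^i(\omega'))<0$. Consequently the split $\rho^i(Y\1_{A_i}+\tilde X^i)=\rho^i(Y\1_{A_i})+\rho^i(\tilde X^i)$ does not follow from comonotonic additivity, and it is in fact false. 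Test it with $\rho^i=V@R_{\alpha_i}$: for $Z_1=M\1_{A}$ with $P(A)=\alpha_i$ and $M$ large, and $Z_2\geq 0$ supported on $A^c$ with ${\rm esssup~}Z_2<M$, one has $V@R_{\alpha_i}(Z_1)=0$ but $V@R_{\alpha_i}(Z_1+Z_2)={\rm esssup~}Z_2$, which strictly exceeds $V@R_{\alpha_i}(Z_2)$ unless the upper $\alpha_i$-quantile of $Z_2$ already equals its essential supremum.

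The correct argument exploits not disjointness but domination: on its support the spike satisfies $Y=V@R_U(Y)\geq V@R_d(Y)\geq \tilde Y\geq \tilde X^i$, so it occupies exactly the top $\alpha_i$ quantile levels of $X^i$ and pushes the quantile function of $\tilde X^i$ down by $\alpha_i$, giving $V@R_\lambda(X^i)=V@R_{\lambda-\alpha_i}(\tilde X^i)+\frac{\essinf X}{n}$ for $\lambda\geq\alpha_i$ (equation \eqref{Eq_VaR1}). Integrating against $g^i(d\lambda)$, which is carried by $[\alpha_i,1]$, this yields $\rho^i(X^i)=\int_{[0,1-\alpha_i]}V@R_\mu(\tilde X^i)\,\widehat{g^i}(d\mu)+\frac{\essinf X}{n}$, i.e.\ the \emph{active part} $\widehat{g^i}$ applied to $\tilde X^i$ --- whereas your additive split produces $\rho^{g^i}(\tilde X^i)=\int_{[0,1-\alpha_i]}V@R_{\mu+\alpha_i}(\tilde X^i)\,\widehat{g^i}(d\mu)$, which is in general strictly smaller. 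Since $f=\min\{\widehat{g^1},\dots,\widehat{g^n}\}$ and the selection rule for $r_i$ are formulated in terms of the active parts, your version of the sum carries a spurious extra shift $+\alpha_i$ in the quantile level and does not collapse to $\int_{[0,1]}V@R_\lambda(Y)\,g(d\lambda)$; the asserted equality is therefore not established for $d<1$. (For $d\geq 1$ the discrepancy disappears only because $\tilde X^i\equiv 0$.) The remainder of your outline survives once this step is replaced by the quantile-shift argument.
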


\begin{proof}
See Section \ref{proof:main1}.
\end{proof}

\begin{corollary}\label{cor1}
Suppose that the conditions of Theorem \ref{thm:main1} hold. 
The solution to the optimal risk sharing problem \eqref{eq:riskshare} is bounded by
$$ \square_{i=1}^n \rho^i \; (X) \leq   \int _{[0,1]} V@R_\lambda (X - \essinf X) g (d\lambda)  + \essinf X. $$ In particular, if $d\geq 1$, this bound is equal to the total risk of the best case scenario $\essinf X$ of $X$ evaluated by an arbitrary normalized risk measure, i.e.
$$ \square_{i=1}^n \rho^i \; (X) \leq   \essinf X  .$$
\end{corollary}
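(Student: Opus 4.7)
The plan is to extract Corollary \ref{cor1} directly from Theorem \ref{thm:main1}, since the theorem already produces an explicit feasible allocation that attains the proposed upper bound. By definition, $\square_{i=1}^n\rho^i(X)$ is the infimum of $\sum_i \rho^i(X^i)$ over all allocations $(X^i)_{i=1,\dots,n}\in (L^\infty)^n$ with $\sum_i X^i = X$. So the inequality reduces to exhibiting one such allocation whose aggregate risk equals $\int_{[0,1]} V@R_\lambda(X - \essinf X)\,g(d\lambda) + \essinf X$.

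First, I would invoke Theorem \ref{thm:main1}(i), which supplies $X^1,\dots,X^n\in L^\infty$ with $\sum_{i=1}^n X^i = X$ and $\sum_{i=1}^n \rho^i(X^i) = \int_{[0,1]} V@R_\lambda(X-\essinf X)\,g(d\lambda) + \essinf X$. Because this allocation is admissible for the infimum in \eqref{eq:riskshare}, the general upper bound follows immediately from the definition of $\square_{i=1}^n\rho^i(X)$.

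Next, for the case $d \geq 1$, I would note that the function $g$ defined in Theorem \ref{thm:main1} is identically zero on $[0,1]$ (since $d \wedge 1 = 1$), hence the Lebesgue--Stieltjes integral against $g$ vanishes and the bound collapses to $\essinf X$. To finish the corollary's claim that this equals \emph{the total risk of the best case scenario evaluated by an arbitrary normalized risk measure}, I would observe that for any normalized monetary risk measure $\rho$, cash-invariance and normalization give $\rho(\essinf X) = \rho(0) + \essinf X = \essinf X$; so writing $\essinf X$ as $\rho(\essinf X)$ is legitimate for any such $\rho$.

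There is essentially no obstacle here: the full combinatorial and measure-theoretic work is done in Theorem \ref{thm:main1}, where both the allocation \eqref{allocation}--\eqref{xisimple} and the identity for its aggregate risk are established. The corollary is just the recognition that any feasible allocation yields an upper bound on the inf-convolution, together with the simplification $g\equiv 0$ when $d\geq 1$ and a one-line application of cash-invariance to interpret the constant $\essinf X$ as the risk of the deterministic best-case outcome.
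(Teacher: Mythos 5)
Your proposal is correct and follows exactly the paper's own argument: the allocation from Theorem \ref{thm:main1} is feasible for the infimum in \eqref{eq:riskshare}, so its aggregate risk is an upper bound, and the $d\geq 1$ case follows from the simplification $g\equiv 0$ stated in the theorem. The additional remark interpreting $\essinf X$ via cash-invariance and normalization is a harmless elaboration of what the paper leaves implicit.
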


\begin{proof}
See Section \ref{proof:cor1}.
\end{proof}

Let us now specify additional assumptions such that the upper bound of Corollary \ref{cor1} is at the same time a lower bound and thus equal to the value of the optimal risk sharing problem. 

\begin{thm}\label{thm:main2}
Suppose that the conditions of Theorem \ref{thm:main1} hold. In addition, assume that $d<1$ and $g^i(1-d+\alpha_i) =1$ for $i=1,2, \dots, n$, and that the active parts of the distortion functions $g^1, g^2, \dots, g^n$ are concave. Then the allocation defined in eq. \eqref{allocation} provides a solution to the optimal risk sharing problem \eqref{eq:riskshare}  and 
$$ \square_{i=1}^n \rho^i \; (X) =   \int _{[0,1]} V@R_\lambda (X ) g (d\lambda)  . $$
\end{thm}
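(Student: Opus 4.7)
The plan is to upgrade Theorem~\ref{thm:main1}, which furnishes the upper bound and an explicit minimiser, into the matching identity. First I would verify that the target value coincides with the upper bound of Corollary~\ref{cor1}: the hypothesis $g^i(1-d+\alpha_i)=1$ gives $\hat g^i(1-d)=1$ for every $i$, hence $g(1)=f(1-d)=\min_i\hat g^i(1-d)=1$, so $g$ is a bona fide distortion function with total mass one. Cash invariance of V@R then yields $\int V@R_\lambda(X-\essinf X)\,g(d\lambda)+\essinf X = \int V@R_\lambda(X)\,g(d\lambda)$, and Corollary~\ref{cor1} together with the allocation~\eqref{allocation} delivers the $\leq$ half. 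Only the matching lower bound remains.

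For the lower bound I would use the mixture representation of Remark~\ref{rem:repre}, truncated by $\hat g^i(1-d)=1$, to write
\[
\rho^i(X^i)=\int_{[0,1-d]} V@R_{\mu+\alpha_i}(X^i)\,\hat g^i(d\mu),
\]
and attack the infimum in two stages. \emph{Stage A --- comonotonic reduction.} Since each $\rho^i$ is law invariant and comonotonic, I would invoke a comonotonic-improvement argument (\`a la Landsberger--Meilijson and R\"uschendorf) to replace an arbitrary $(X^i)$ by a comonotonic $(\tilde X^i)$ with the same marginal laws and $\sum_i\tilde X^i=X$, without increasing $\sum_i\rho^i(X^i)$. \emph{Stage B --- quantile-level optimisation.} A comonotonic allocation has the form $X^i=\varphi_i(X)$ with non-decreasing $\varphi_i$ summing to the identity, and comonotonic additivity gives $V@R_\lambda(X^i)=\varphi_i(V@R_\lambda(X))$. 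The sum $\sum_i\rho^i(X^i)$ therefore reduces to a deterministic problem in the quantile function of $X$. The matching condition $\sum_i\alpha_i=d$ permits a pointwise optimum in which each entity's flat segment absorbs exactly an $\alpha_i$-slice of the upper tail of $X$; on the remaining interval of length $1-d$ the integrand is bounded below by the weight $f=\min_i\hat g^i$, giving after the change of variable $\lambda=\mu+d$ the value $\int V@R_{\mu+d}(X)\,f(d\mu)=\int V@R_\lambda(X)\,g(d\lambda)$.

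The main obstacle is Stage~A. Because the flat initial segments of the $g^i$ render each $\rho^i$ non-convex, the standard comonotonic-improvement results for convex law-invariant risk measures (Jouini--Schachermayer--Touzi) do not apply off the shelf. Concavity of the active parts $\hat g^i$ is precisely the property that rescues this step: it renders the tail portion of each $\rho^i$ coherent and admits a Kusuoka-type dual representation on the truncated quantile range $[\alpha_i,1-d+\alpha_i]$, so that comonotonic rearrangement becomes risk-non-increasing. This is also the ingredient that upgrades Corollary~\ref{cor1} into a sharp identity; once Stage~A is in place, Stage~B is a calculus exercise exploiting the matching condition $d<1$ and concavity of $f$.
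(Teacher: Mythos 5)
Your opening paragraph is correct and matches the paper: $g^i(1-d+\alpha_i)=1$ gives $\widehat{g^i}(1-d)=1$, hence $g(1)=f(1-d)=1$, and the inequality ``$\leq$'' follows from Corollary \ref{cor1} and cash-invariance. The gap is in the lower bound, specifically in Stage A, and Stage B inherits it. Comonotonic improvement (Landsberger--Meilijson, R\"uschendorf) yields a comonotonic allocation $(\tilde X^i)$ with each $\tilde X^i$ dominated by $X^i$ in convex order; to conclude $\rho^i(\tilde X^i)\leq\rho^i(X^i)$ you need $\rho^i$ to be monotone with respect to convex order, which for a distortion risk measure holds if and only if $g^i$ is concave on \emph{all} of $[0,1]$. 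A V@R-type distortion function with parameter $\alpha_i>0$ vanishes on $[0,\alpha_i]$ and equals $1$ at $1$, so it is never concave, and concavity of the active part does not repair this: the flat initial segment is precisely what makes $\rho^i$ reward mean-preserving spreads that push mass into the top $\alpha_i$-tail. Indeed, if comonotonic rearrangement were risk-non-increasing here, the infimum in \eqref{eq:riskshare} would be attained on comonotonic allocations -- but the optimal allocation \eqref{allocation} is deliberately non-comonotonic: the events $\{\sum_{l<i}\alpha_l\leq U<\sum_{l\leq i}\alpha_l\}$ on which the entities absorb the tail are disjoint. Your Stage B picture, in which ``each entity's flat segment absorbs a different $\alpha_i$-slice of the upper tail,'' is impossible within the comonotonic class: for $X^i=\varphi_i(X)$ with $\varphi_i$ non-decreasing, the top-$\alpha_i$ events of the $X^i$ are all upper-tail events of $X$ and hence nested, not disjoint, so comonotonic allocations can only discard a flat segment of length $\max_i\alpha_i$, not $d=\sum_i\alpha_i$.

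The paper avoids comonotonic improvement entirely and proves ``$\geq$'' by induction on $n$, doing $n=2$ by hand: for an arbitrary allocation $(X^1,X^2)$ with $X^i\geq 0$, it sets $Y^i=X^i\cdot\1_{\{\alpha_i\leq U^i\}}$, so that $\rho^i(X^i)=\rho^{\widehat{g^i}}(Y^i)\geq\rho^f(Y^i)$; concavity of the active parts enters only through the subadditivity $\rho^f(Y^1)+\rho^f(Y^2)\geq\rho^f(Y^1+Y^2)$; and the comparison with $\rho^g(X)$ follows from the elementary survival-function estimate $P\{Y^1+Y^2>x\}\geq P\{X^1+X^2>x\}-d$ combined with $f((y\vee d)-d)=g(y)$. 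The point your truncation idea misses is that after truncating, $\sum_i Y^i\neq X$, so one cannot simply re-optimize the truncated problem; the probability estimate is what controls that discrepancy. To salvage your outline you would need to replace Stage A by a direct estimate of this kind.
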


\begin{proof}
See Section \ref{proof:main2}.
\end{proof}

\begin{rem}
Distortion risk measures with concave active parts and parameter $\alpha >0 $ were discussed in Example 3.3 in \ci{WBT15} in the context of robust modifications of coherent risk measures. 
\end{rem}

Theorem \ref{thm:main1}, Corollary \ref{cor1} and Theorem \ref{thm:main2} provide an important perspective on capital regulation based on V@R-type distortion risk measures. They show that (if risk is measured by a normalized risk measure and the network consists of sufficiently many entities)  the total capital requirement can be made equal to the capital requirement of the best case scenario of the network, i.e.  $\essinf X = - {\rm esssup} \; E_1$. This quantity is an upper bound to the solution of the optimal risk sharing problem. Downside risk can thus completely be hidden within corporate network structures. V@R is a special case of a V@R-type distortion risk measure, and our observations apply to Solvency II. In contrast, they do not apply to the Swiss Solvency Test that uses the coherent risk measure AV@R as the basis for capital regulation. 

Observe that the allocation $(X^i)$ in equation \eqref{xisimple} is bounded from below by $\frac {\essinf X}{n}$ and from above by ${\rm esssup~} Y +  \frac {\essinf X}{n}$. We could thus restrict the admissible allocations to those that are bounded by suitable fixed constants and still obtain the results stated above. If no bounds are imposed, the situation can even be more serious from the point of view of capital regulation. We will show  in Example \ref{RV@Rbd}, Theorem \ref{thm:small}, and Remark \ref{domains} that the total capital requirement of the network can be further reduced, if no bounds are imposed on admissible profits and losses of network entities. In these cases, risk sharing can be used to make the total risk $\sum_{i=1}^n \rho^i (X^i)$ arbitrarily small for appropriately chosen allocations, i.e. smaller than $-m$ for any $m \in \bbn$. In Example \ref{RV@Rbd} and Theorem \ref{thm:small} very large losses of one entity may occur due to transfer agreements with an other entity that experiences large profits in the corresponding scenarios. Remark \ref{domains}, in contrast, parallels the results of  Theorem \ref{thm:main1} and Corollary \ref{cor1}, but in the case where $X$ itself is unbounded. 

\begin{ex}\label{RV@Rbd}
Let $(\Omega, \fil, P)$ be a probability space without atoms. Consider a corporate network of $n=2$ entities with risk measures $\rho^1 = \rho^2 = RV@R_{\frac{1}{4}, \frac{3}{4}}$. We will show\footnote{General results on inf-convolutions of RV@R are given in \ci{embrechtsliuwang2016} that imply the results given in the example.} that $\square_{i=1}^2 \rho^i \; (0) = - \infty $.
To this end, let $A_1, A_2 \subseteq \Omega$ be a partition of $\Omega$ such that $P(A_1) = \frac 1 8$, $P (A_2) = \frac 7 8 $. Let $m\in \bbn$ be arbitrary, and set $X^1 := 6m \cdot \1_{A_1}$, $X^2 := - 6m \cdot \1_{A_1}$. Then $X^1 + X^2 = 0$, $\rho^1(X^1) = 0$, $\rho^2 (X^2) = -6m\cdot \frac 1 8 \cdot \frac 4 3 = -m$. Thus, $\square_{i=1}^2 \rho^i \; (0)  \leq -m$ for any $m\in \bbn$. 
\end{ex}

We now provide a theorem that characterizes the situation of the previous example on a general level.

\begin{thm} \label{thm:small}
Suppose that the conditions of Theorem \ref{thm:main1} hold and assume that there exists $i\in \{ 1,2, \dots, n   \}$ such that $g^i( 1- d + \alpha_i) <1$. Then 
$$  \square_{i=1}^n \rho^i \; (X)\; = \; - \infty . $$
\end{thm}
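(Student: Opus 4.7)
My plan is to generalize the construction of Example~\ref{RV@Rbd} directly, without appealing to Theorem~\ref{thm:main1}: for each $m \in \bbn$ I will exhibit an allocation of $X$ whose aggregate cost is bounded above by a fixed constant minus a strictly positive multiple of $m$. The intuition is that $g^i(1-d+\alpha_i)<1$ forces the Lebesgue--Stieltjes measure $g^i(d\lambda)$ to place positive mass on $(1-d+\alpha_i,1]$, so $\rho^i$ is sensitive to extreme profits (very negative values of $X^i$, under the loss-positive convention). I will exploit this by dumping a huge profit onto entity $i$, while the compensating loss is carved up among the remaining $n-1$ entities on disjoint events of probability at most $\alpha_k$---slices that are invisible to each $\rho^k$, $k\neq i$, because $g^k\equiv0$ on $[0,\alpha_k]$.

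Concretely I set $q := \min\{d-\alpha_i,\,1\}$ and note $q>0$, since the hypothesis together with $g^i(1)=1$ and monotonicity forces $d>\alpha_i$. Using atomlessness of $(\Omega,\fil,P)$, I pick $A\in\fil$ with $P(A)=q$ and a disjoint partition $A=\bigsqcup_{k\neq i}A_k$ satisfying $P(A_k)\leq\alpha_k$ (take equality when $d-\alpha_i\leq 1$; when $d-\alpha_i>1$ take $A=\Omega$ and $P(A_k)=\alpha_k/(d-\alpha_i)$, which is feasible because $\sum_{k\neq i}\alpha_k=d-\alpha_i\geq 1$). For each $m\in\bbn$ I then define
$$X^i_m\,:=\,X-m\,\1_A,\qquad X^k_m\,:=\,m\,\1_{A_k}\quad(k\neq i),$$
which form an admissible allocation of $X$.

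The estimates follow from the mixture representation in Remark~\ref{rem:repre}. For each $k\neq i$, $V@R_\lambda(m\1_{A_k})$ equals $m$ on $[0,P(A_k))$ and $0$ on $[P(A_k),1]$; combined with $g^k\equiv0$ on $[0,\alpha_k]\supseteq[0,P(A_k)]$, integration yields $\rho^k(X^k_m)=m\,g^k(P(A_k))=0$. For entity $i$, monotonicity and cash-invariance give $\rho^i(X^i_m)\leq\|X\|_\infty+\rho^i(-m\,\1_A)$, while the analogous quantile computation together with the left-continuity of $g^i$ gives $\rho^i(-m\,\1_A)=-m\bigl(1-g^i(1-q)\bigr)$. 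The case split ensures $1-g^i(1-q)>0$: this equals $1-g^i(1-d+\alpha_i)>0$ when $d-\alpha_i\leq 1$, and equals $1-g^i(0)=1$ when $d-\alpha_i>1$. Summing over $k$ gives
$$\sum_{k=1}^{n}\rho^k(X^k_m)\,\leq\,\|X\|_\infty-m\bigl(1-g^i(1-q)\bigr) \to -\infty \quad \text{as } m\to\infty,$$
whence $\square_{k=1}^n\rho^k(X)=-\infty$.

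The only genuinely delicate point is the bookkeeping around the left-continuity of $g^i$ at $\lambda=1-q$ when evaluating the mass of $[1-q,1]$; this is precisely why the hypothesis is stated as $g^i(1-d+\alpha_i)<1$ rather than through a one-sided limit. The potentially awkward boundary case $d-\alpha_i\geq 1$ (where $1-d+\alpha_i$ falls outside $[0,1]$) is absorbed by the $\min$-convention in the definition of $q$, noting that $\sum_{k\neq i}\alpha_k\geq 1$ is exactly the condition that allows a partition of all of $\Omega$ with $P(A_k)\leq\alpha_k$.
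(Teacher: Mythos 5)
Your proof is correct and follows essentially the same strategy as the paper's: give the sensitive entity $i$ a profit of size $m$ on an event of probability $\min\{d-\alpha_i,1\}$, which $\rho^i$ rewards because $g^i(1-d+\alpha_i)<1$, and offload the compensating losses onto the remaining entities on disjoint events of probability at most $\alpha_k$, which the measures $\rho^k$ cannot see. The only differences are cosmetic --- the paper also reshuffles the slices of $X$ onto the receiving entities so as to compute the total risk exactly, whereas you keep $X$ with entity $i$ and settle for the upper bound $\|X\|_\infty+\rho^i(-m\,\1_A)$, and you treat the boundary case $d-\alpha_i>1$ explicitly.
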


\begin{proof}
See Section \ref{proof:small}.
\end{proof}

From a regulatory point of view, under the conditions of the last theorem, capital regulation can completely be circumvented in corporate networks: the total downside risk measurements are not bounded from below anymore. One should, however, note that network allocations with arbitrarily small total risk are associated with arbitrarily large losses and profits of some of the entities of the network. For insurance networks, the implementation of the required transfer agreements might not be realistic, if the admissible leverage is bounded for each entity. This means that in practice Theorem \ref{thm:small} is less relevant for capital regulation than Theorem \ref{thm:main1}, Corollary \ref{cor1} and Theorem \ref{thm:main2}. It stresses, however, potential problems that might occur if significant leverage and V@R-type distortion risk measures are used together.

\begin{rem}\label{domains}
We have been considering the risk sharing problem for bounded $X\in L^\infty$, but this restriction is not necessary. Suppose now that $X$ is an arbitrary random variable. If $X$ is bounded from below, i.e. $\essinf X > - \infty$, it is not difficult to verify that the results of Theorem \ref{thm:main1} are still valid. This is due to the fact that in Theorem \ref{thm:main1} total losses $X$ beyond $V@R_d(X)$ are allocated to the positions $(X^i)_{i=1,2, \dots, n}$ such that they do not influence the risk measurements $(\rho^i(X^i))_{i=1,2, \dots,n}$. 

Next, let us assume that $X$ is not bounded from below anymore, i.e. $\essinf X = - \infty$. Consider the case: $n$ large enough and $d\geq 1$. Then $\essinf (X \vee (-k)) = - k $ for $k\geq0$. By the monotonicity of risk measures, $$ \square_{i=1}^n \rho^i  (X) \; \leq \;   \square_{i=1}^n \rho^i  (X \vee (-k)) \; = \; -k \;  \stackrel{k\to \infty}{\longrightarrow}    \;  - \infty .$$
We thus obtain a result that is analogous to the situation of Theorem \ref{thm:small}: the total downside risk measurement is not bounded from below, if the best case is unbounded.

\end{rem}

\subsection{Risk sharing for V@R-type risk measures}\label{sec:V@Rtype}

So far, we have been focussing on distortion risk measures. For $d<1$ we do, indeed, need this particular structure to compute the exact total risk of the allocation defined in eq. \eqref{allocation}. This allocation provides firstly an upper bound for the total risk of the optimal risk sharing problem and secondly a solution in the case of concave active parts. In the case of $d\geq 1$ the allocation defined in eq. \eqref{xisimple} provides a bound, and it turns out that this result is not limited to the family of distortion risk measures. The next theorem provides a precise statement. In addition, we will also be able to generalize Theorem \ref{thm:small} beyond the case of distortion risk measures. 

\begin{defi}
A distribution-based risk measure $\rho: L^\infty \to \bbr$ is a \emph{V@R-type risk measure} with parameter $\alpha>0$, if 
$$\rho (X)\;  = \; \rho \left(X \cdot {\bf 1}_{\{V@R_\alpha (X)\geq X\}}  + V@R_\alpha (X) \cdot {\bf 1}_{\{V@R_\alpha (X) < X\}} \right)\quad\quad\quad  (X\in L^\infty) .$$
\end{defi}

\begin{rem}
Obviously, any V@R-type distortion risk measure with parameter $\alpha >0 $ is a V@R-type risk measure with the same parameter. This follows immediately from Remark \ref{rem:repre}. Moreover, any V@R-type risk measure with parameter $\alpha >0 $ is dominated by $\rho(0) + V@R_\alpha $, since $\rho$ is monotone and $X \cdot {\bf 1}_{\{V@R_\alpha (X)\geq X\}}  + V@R_\alpha (X) \cdot {\bf 1}_{\{V@R_\alpha (X) < X\}} \leq V@R_\alpha (X)$ for $X\in L^\infty$.
\end{rem}

\begin{thm}\label{thm:main3}
Let $X\in L^\infty$, $\rho^1, \rho^2, \dots, \rho^n$ be V@R-type risk measures with parameters $\alpha_1, \alpha_2, \dots, \alpha_n$, and the allocation $(X^i)_{i=1,2, \dots, n }$  be given according to equation \eqref{xisimple}. If $d=\sum_{i=1}^n \alpha_i \geq 1$, then  
$$ \square_{i=1}^n \rho^i \; (X) \leq \; \sum_{i=1}^n \rho^i (X^i)  \; =  \;  \sum_{i=1}^n \rho^i(0)  \; + \;  \essinf X.  $$
 In particular, if the risk measures $\rho^i$ are normalized, i.e. $\rho^i(0) =0$, $i=1,2,\dots, n$, then the minimal total risk is bounded by the risk of the best case scenario $\essinf X$ of $X$ evaluated by an arbitrary normalized risk measure. 
\end{thm}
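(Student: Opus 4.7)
The inf-convolution is by definition bounded above by $\sum_{i=1}^n \rho^i(X^i)$ for any admissible allocation, so the plan is simply to evaluate the sum for the concrete allocation specified in \eqref{xisimple} and show it equals $\sum_{i=1}^n \rho^i(0) + \essinf X$. First I would verify admissibility: boundedness of each $X^i$ is immediate since $0 \leq Y \leq \|X\|_\infty - \essinf X$, and the hypothesis $d \geq 1$ ensures that the sets $B_i := \{\sum_{l<i}\alpha_l \leq U < \sum_{l \leq i}\alpha_l\}$ cover $\{U \in [0,1)\}$ up to a $P$-null set (with some $B_i$ possibly of reduced mass or empty once the partial sums exceed $1$). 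Hence $\sum_i \1_{B_i} = 1$ a.s., so $\sum_{i=1}^n X^i = Y + \essinf X = X$.

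The key computation is to show that $V@R_{\alpha_i}(X^i) = c$ with $c := (\essinf X)/n$. Writing $X^i = Y \cdot \1_{B_i} + c$ and using $Y \geq 0$, one has $X^i \geq c$ pointwise and $\{X^i > c\} \subseteq B_i$, so $P(X^i > c) \leq P(B_i) \leq \alpha_i$, while $P(X^i > m) = 1$ for every $m < c$. Hence the infimum in the definition of $V@R_{\alpha_i}$ is attained exactly at $c$.

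Now I would invoke the defining identity of a V@R-type risk measure applied to $X^i$. Since $V@R_{\alpha_i}(X^i) = c$ and $X^i \geq c$ almost surely, the truncated random variable $X^i \cdot \1_{\{c \geq X^i\}} + c \cdot \1_{\{c < X^i\}}$ collapses to the deterministic constant $c$: on $\{X^i = c\}$ the first summand is $c$ and on $\{X^i > c\}$ the second is $c$. The V@R-type property therefore yields $\rho^i(X^i) = \rho^i(c)$, and cash-invariance reduces this to $\rho^i(0) + c$. Summing over $i$ gives $\sum_{i=1}^n \rho^i(X^i) = \sum_{i=1}^n \rho^i(0) + \essinf X$, which together with admissibility proves the stated inequality.

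The only mildly subtle point is the handling of indices $i$ for which $\sum_{l \leq i}\alpha_l > 1$, where $B_i$ has $P$-mass strictly less than $\alpha_i$ (or is empty). I expect this to be the main place where a careful proof could stumble, but it causes no real difficulty: only the inequality $P(B_i) \leq \alpha_i$ is used in the $V@R$ estimate, and in the empty case $X^i$ is already the constant $c$ so both the $V@R$ identification and the ensuing truncation step are trivial.
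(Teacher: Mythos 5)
Your proposal is correct and follows essentially the same route as the paper: evaluate the allocation \eqref{xisimple}, identify $V@R_{\alpha_i}$ of each piece, and use the defining identity of a V@R-type risk measure to collapse each $X^i$ to a constant, then apply cash-invariance. The only cosmetic difference is that the paper first strips off the constant (working with $Z^i = X^i - \essinf X/n$, for which $V@R_{\alpha_i}(Z^i)=0$) before invoking the V@R-type property, while you apply it directly to $X^i$ with $V@R_{\alpha_i}(X^i)=\essinf X/n$; your explicit checks of admissibility and of the indices where the partial sums of the $\alpha_l$ exceed $1$ are welcome but not substantively different.
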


\begin{proof}
See Section \ref{proof:main3}.
\end{proof}

\begin{rem}
Arguments analogous to those in Remark~\ref{domains} show that the result of Theorem \ref{thm:main3} is not limited to positions in the space $L^\infty$, but holds for larger spaces of random variables. If $\essinf X = - \infty $, the total risk measurement can be made arbitrarily small for suitable risk sharing allocations. 
\end{rem}

We finally show how Theorem \ref{thm:small} may be generalized beyond the case of distortion risk measures. 

\begin{defi}
A distribution-based risk measure $\rho: L^\infty \to \bbr$ is \emph{surplus sensitive at level $\alpha >0$}, if
$$\rho (X)> \rho (X - m \cdot \1 _{\{  V@R_{1-\alpha} (X) \geq X  \}} ) =: h_x(m)$$
for any $m>0$.

If, in addition, $h_X(m) \to - \infty$ as $m\to \infty$, then $\rho$ is \emph{strongly surplus sensitive at level $\alpha >0$}.
\end{defi}

\begin{rem}
\begin{enumerate}
\item A risk measure that is surplus sensitive is not necessarily strongly surplus sensitive at the same level. An example is the entropic risk measure. We consider the special case $\rho(X) = \log E(e^X)$. Setting $\alpha = 1/5$, we compute $h_X$ for a random variable $X$ with $P(X= \log 10) = 1/10$ and $P(X=0) = 9/10$. Then $h_X(m) = \log E(e^{X-m\cdot \1 _{V@R_{4/5} (X) \geq X}}) = \log (\frac 1 {10} e^{\log 10} + \frac 9 {10} e^{-m}) = \log (1 + \frac {9}{10} {e^{-m}} ) \stackrel{m\to \infty} {\longrightarrow} 0 > - \infty$.
\item 
In contrast, any distortion risk measure $\rho^g$ with distortion function $g$ such that $g(x)< 1$ for $x<1$ is strongly surplus sensitive at any level $d>0$.  Another example are expectiles with acceptance set 
$ \left\{  X\in L^\infty : \quad \frac {E(X^-)}{E(X^+)} \geq \gamma  \right\}$
for $\gamma >0$.
\end{enumerate}
\end{rem}

\begin{thm}\label{thm:smallsimple}
Let $X\in L^\infty$ and $\rho^1, \rho^2, \dots, \rho^n$ be V@R-type risk measures with parameters $\alpha_1, \alpha_2, \dots, \alpha_n$. Set $d = \sum_{i=1}^n \alpha_i$.
If $\rho^{n+1}$ is strongly surplus sensitive at level $d$, then 
$$\square_{i=1}^{n+1} \rho^i (X) = - \infty  . $$
\end{thm}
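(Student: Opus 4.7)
The plan is to exhibit an explicit allocation whose total risk diverges to $-\infty$. Since the space is atomless, I choose a measurable $A^* \subset \Omega$ with $P(A^*) = d$ (assuming $d \leq 1$; the case $d > 1$ proceeds analogously after rescaling the $\alpha_i$) together with a measurable partition $A^* = A_1 \cup \dots \cup A_n$ satisfying $P(A_i) = \alpha_i$. For a parameter $M > 0$ set
$$X^i := M \1_{A_i} \quad (i = 1, \dots, n), \qquad X^{n+1} := X - M \1_{A^*},$$
so that $\sum_{i=1}^{n+1} X^i = X$.

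For each $i \leq n$, since $X^i \geq 0$ and $P(X^i > 0) = P(A_i) = \alpha_i$, one has $V@R_{\alpha_i}(X^i) = 0$. The truncated random variable $X^i \cdot \1_{\{0 \geq X^i\}} + 0 \cdot \1_{\{0 < X^i\}}$ therefore vanishes identically, so the V@R-type defining identity yields $\rho^i(X^i) = \rho^i(0)$, a constant independent of $M$.

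The main obstacle is to show that $\rho^{n+1}(X - M \1_{A^*}) \to -\infty$ as $M \to \infty$. Strong surplus sensitivity of $\rho^{n+1}$ is attached to the specific set $\{V@R_{1-d}(X) \geq X\}$, which in general is \emph{not} $A^*$. To bypass this mismatch I introduce the auxiliary position
$$Z := X + L \cdot \1_{A^{*c}}, \qquad L > {\rm esssup}\, X - \essinf X.$$
A short computation shows $V@R_{1-d}(Z) = {\rm esssup}_{A^*} X$ and $\{V@R_{1-d}(Z) \geq Z\} = A^*$: on $A^{*c}$ the shift pushes $Z$ strictly above every value attained by $Z$ on $A^*$, while on $A^*$ we have $Z = X \leq {\rm esssup}_{A^*} X$. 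Applying strong surplus sensitivity of $\rho^{n+1}$ at level $d$ to $Z$ then yields $\rho^{n+1}(Z - M \1_{A^*}) \to -\infty$. Because $X \leq Z$ and $\rho^{n+1}$ is monotone,
$$\rho^{n+1}(X - M\1_{A^*}) \;\leq\; \rho^{n+1}(Z - M\1_{A^*}) \;\to\; -\infty.$$

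Collecting the pieces, $\sum_{i=1}^{n+1} \rho^i(X^i) = \sum_{i=1}^n \rho^i(0) + \rho^{n+1}(X - M \1_{A^*}) \to -\infty$ as $M \to \infty$, which proves $\square_{i=1}^{n+1} \rho^i(X) = -\infty$.
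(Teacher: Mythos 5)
Your proof is correct, and its overall strategy matches the paper's: give each of the first $n$ entities a nonnegative spike supported on a set of probability $\alpha_i$, which a V@R-type risk measure prices at exactly $\rho^i(0)$, and let the strongly surplus sensitive entity absorb a transfer of $-M$ on the union $A^*$ of probability $d$. Where the two arguments genuinely differ is in resolving the mismatch you correctly flag between $A^*$ and the set $\{V@R_{1-d}(\cdot)\geq \cdot\}$ in the definition of surplus sensitivity. The paper keeps the body of $X$ with entity $n$ (so its contribution is only bounded above, by $\rho^n(0)+V@R_d(Y)+\essinf X /n$), hands entity $n+1$ the pure position $-m\1_{\{U<d\}}$, and dominates it by $U-m\1_{\{V@R_{1-d}(U)\geq U\}}$, applying strong surplus sensitivity to the uniform variable $U$ itself. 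You instead leave all of $X$ with entity $n+1$ and build the auxiliary position $Z=X+L\1_{A^{*c}}$ whose surplus set at level $d$ is exactly $A^*$ (up to null sets, which is enough since the risk measures are distribution-based), then conclude by monotonicity from $X\leq Z$. Both devices are legitimate; yours yields the cleaner identity $\sum_{i\le n}\rho^i(X^i)=\sum_{i\le n}\rho^i(0)$ with no residual term to control, while the paper's construction keeps entity $n+1$'s position independent of $X$. One point you should spell out rather than wave at: when $d\geq 1$ the rescaling gives $A^*=\Omega$ and $X^{n+1}=X-M$, and there cash-invariance alone produces the divergence, so surplus sensitivity is not even needed.
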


\begin{proof}
See Section \ref{sec:smallsimple}.
\end{proof}

\begin{rem}
Theorem \ref{thm:small} can be seen as a corollary to Theorem \ref{thm:smallsimple}. But the direct proof of Theorem \ref{thm:small} in Section \ref{proof:small} computes in addition explicitly the exact total risk of the allocations that were considered to bound the value of the risk sharing problem \eqref{eq:riskshare} from above. 

In contrast to Theorem \ref{thm:main3}, the practical relevance of Theorem \ref{thm:smallsimple} might be limited by the fact that the required allocations are associated with arbitrarily large losses and profits, i.e. leverage is unbounded. Nevertheless, like Theorem \ref{thm:small}, it highlights the problems that arise when V@R-type risk measures and leverage are combined. 
\end{rem}

\section{Conclusion}\label{sec:concl}

Non-coherent risk measures have frequently been criticized during the last twenty years. The present paper presents another challenge to capital regulation with non-coherent risk measures of V@R-type: sophisticated firms might be able to hide their downside risk within corporate networks by designing suitable intra-network transfers. In this paper, these network transfers are specified as derivatives on the stochastic balance sheet of the network. Future research needs to express these as contingent claims on tradeable securities (or, at least, on quantities that cannot easily be manipulated by the managers of the firm). 

The feasibility of the discovered capital reduction strategy relies on the fact that capital requirements are charged separately for each entity of the network. Intra-network transfers would not reduce the total capital requirement, if the latter was defined on the basis of a consolidated balance sheet. Solvency II essentially requires such an approach for corporate groups, but it seems to be difficult to define a universal legal framework that reaches beyond the EEA. At the same time, results by \ci{embrechtsliuwang2016} indicate that central control is not necessary. However, even if a consolidated approach to multinational corporate networks could be successfully implemented globally, serious problems would remain. Networks consist of legally separate firms with limited liability. If capital requirements are computed on the basis of a consolidated balance sheet, groups could exploit multiple limited liability options associated with their subentities via appropriate strategic defaults at the expense of third parties, optimizing their own gains with prior intra-group transfers. Consolidated balance sheets do not reflect these possibilities. Our argument indicates that consolidated balance sheets should be accompanied by joint liability of the group and \emph{not} be combined with limited liability of subentities. 

In summary, from a regulatory point of view V@R-type risk measures and corporate networks of legally separate entities are not fully compatible. If one recalls other well-known deficiencies of non-coherent risk measures, V@R-type risk measures do not seem to be an optimal ingredient to the regulation of insurance companies.

\vfill

\pagebreak

\appendix

\section{Proofs}\label{sec:proofs}

\subsection{Proof of Theorem \ref{thm:main1}}\label{proof:main1}

The methodology of the proof is primarily based on \ci{Dhaene2012}. The allocation of the downside risk to different positions such that it `can be swept under the carpet' relies on the same strategy as in \ci{embrechtsliuwang2016}. The allocation of the remaining part is inspired by Proposition 5 in \ci{embrechtsliuwang2016} which focuses on comonotone allocations and is originally due to \ci{cuietal}.

\begin{proof}

The existence of a random variable $U$, uniformly distributed on $[0, 1]$, such that $Y = V @R_U (Y )$ is a version of the inverse transform method, see e.g. Section 2.2.1 in \ci{glasserman}.

\noindent We note that $\sum_{i=1}^n r_i(\lambda) = 1$ and 
\begin{equation}\label{Eq_RG}
\sum_{i=1}^n r_i(\lambda) \hat g_i (1- \lambda) = f (1 -\lambda) 
\end{equation}
for $\lambda \in [0,1]$. 

\noindent First, we observe that 

\begin{eqnarray*}
\sum_{i=1} ^n  X^i & = & Y \cdot \sum_{i=1}^n \1_{\{ \sum_{l=1}^{i-1} \alpha_l \leq U < \sum _{l=1}^i \alpha_l   \}}  \; +  \; \sum_{i=1}^n \tilde X^i  \; + \;  \essinf X \\
& = & Y \cdot \1_{\{U<d \}}  \; + \;  \int _0 ^{\tilde Y} \sum_{i=1}^n r_i(\lambda) d \lambda  \; + \;  \essinf X  \\
& = &   Y \cdot \1_{\{U<d \}}  \; + \;  Y \cdot \1_{\{U\geq d \}}  \; + \;    \essinf X \;  = \; X . 
\end{eqnarray*}

\noindent For any $i=1 ,2, \dots, n$, 
$$
 Y \cdot \1_{\{ \sum_{l=1}^{i-1} \alpha_l \leq U < \sum _{l=1}^i \alpha_l   \}}  \geq V@R_d (Y)\geq   \tilde X ^i , \quad   P\left( \left\{ \sum_{l=1}^{i-1} \alpha_l \leq U < \sum _{l=1}^i \alpha_l   \right\}   \right) = \alpha_i. 
$$
Thus, for $\lambda \geq \alpha_i$ we obtain that
\begin{equation}\label{Eq_VaR1}
V@R_\lambda (X^i) \; = \;  V@R_{\lambda - \alpha_i} (\tilde X ^i ) + \frac{\essinf X}{n} .
\end{equation}
Similarly, for $\lambda \geq d$ we have 
\begin{equation}\label{Eq_VaR2}
V@R _\lambda (Y) = V@R_{\lambda- d} (Y \cdot \1 _{\{ U \geq d \}}),
\end{equation}
and for $\lambda \geq 1-d$,
\begin{equation}\label{Eq_VaR3}
V@R _\lambda (Y \cdot \1 _{\{ U \geq d \}}) = 0.
\end{equation}

\noindent Next, we recall some facts from \ci{Dhaene2012}. Let $Z \geq 0$ be a non-negative random variable. We denote the distribution function of $Z$ by $F_Z$. The caglad quantile function of $Z$ is given by
$$
[0,1] \to [0, \infty], \quad \lambda \mapsto F_Z^{-1} (\lambda) = \inf \{z: F_Z(z) \geq \lambda \},
$$
its cadlag value at risk function by
$$
[0,1] \to [0, \infty], \quad \lambda \mapsto V@R_\lambda (Z) = F^{-1}_Z(1-\lambda).
$$
For any left-continuous distortion function $g$ we obtain from Definition 3 in  \ci{Dhaene2012}:
\begin{equation}\label{Eq_Def_Dist}
\rho^g (Z)  \; =  \; \int_0 ^\infty g (1 - F_Z(s)) ds.
\end{equation}
Recall from Remark \ref{rem:repre} that $\rho^g (Z) = \int_0^1  V@R_\lambda (Z) g (d \lambda)$, even if $Z$ is not non-negative.

\noindent Equation (2) in  \ci{Dhaene2012} implies
\begin{equation}\label{Eq_Inequ}
s < V@R_\lambda (Z) \quad \Leftrightarrow \quad F_Z(s) < 1 - \lambda. 
\end{equation}
\noindent Finally, we obtain 
\begin{eqnarray*}
\sum_{i=1} ^n \rho^{i} (X^i)  &   = & \sum_{i=1}^n  \int_0^1 V@R_\lambda (X^i) d g^i (\lambda)  
 \\
  & \stackrel{eq.~\eqref{Eq_VaR1} }{=} & \essinf X + \sum_{i=1 }^n \int_{[\alpha_i ,1]}    V@R _{\lambda- \alpha_i}    (\tilde X^i) d g ^{i} (\lambda)\\
  & = &  \essinf X + \sum_{i=1 }^n \int_{[0, 1- \alpha_i ]}    V@R _{\lambda}    (\tilde X^i) d \hat g ^{i} (\lambda)\\
  & = &  \essinf X + \sum_{i=1 }^n \int_0 ^1    V@R _{\lambda}    (\tilde X^i) d \hat g ^{i} (\lambda) \quad \quad \mbox{(since $\hat g ^i(\lambda ) = 1$ for $\lambda\geq 1 - \alpha_i$)}\\
  & = &  \essinf X + \sum_{i=1 }^n \int_0 ^1    V@R _{\lambda}    (R_i(\tilde Y)) d \hat g ^{i} (\lambda) \\
  & = &  \essinf X + \sum_{i=1 }^n \int_0 ^1  R_i[   V@R _{\lambda}    (\tilde Y) ] d \hat g ^{i} (\lambda) \quad \quad \mbox{(since $R_i$ monotone increasing)}
  \\
  & = &  \essinf X + \sum_{i=1 }^n \int_0 ^1 \int_0^\infty \1_{[0, V@R _{\lambda}    (\tilde Y) )}  (s) r_i(s) ds d \hat g ^{i} (\lambda) \\
  & = &  \essinf X + \sum_{i=1 }^n \int_0^\infty \int_0 ^1  \1_{[0, V@R _{\lambda}    (\tilde Y) )}  (s)  d \hat g ^{i} (\lambda) r_i(s) ds \quad \mbox{(by Fubini's theorem)} \\
  & \stackrel{eq.~\eqref{Eq_Inequ} }{=} &  \essinf X + \sum_{i=1 }^n \int_0^\infty  g_i[ (1- F_{\tilde Y} (s)) - ] r_i(s) ds 
   \\
  & {=} &  \essinf X +  \int_0^\infty    \sum_{i=1 }^n g_i (1- F_{\tilde Y} (s))   r_i(s) ds \quad \mbox{(since $g_i$ left-continuous)}   
  \\
  & \stackrel{eq.~\eqref{Eq_RG} }{=}  &  \essinf X +  \int_0^\infty    f (1- F_{\tilde Y} (s))    ds 
   \\
  & \stackrel{eq.~\eqref{Eq_Def_Dist} \; \&\; Rem.~\ref{rem:repre}  }{=}  &  \essinf X +  \int_0^1   V@R_\lambda(Y \cdot \1 _{\{ U \geq d\}})  d f (\lambda)
   \\
  & \stackrel{eq.~\eqref{Eq_VaR3} }{=}  &  \essinf X +  \int_0^{1-d}   V@R_\lambda(Y \cdot \1 _{\{ U \geq d\}})  d f (\lambda)
     \\
  & \stackrel{eq.~\eqref{Eq_VaR2} }{=}  &  \essinf X +  \int_0^{1}   V@R_\lambda(Y )  d g (\lambda)
     \\
  &=  &  \essinf X +  \int_0^{1}   V@R_\lambda(X - \essinf X  )  d g (\lambda)
   \end{eqnarray*}

\end{proof}

\subsection{Proof of Corollary \ref{cor1}}\label{proof:cor1}

\begin{proof}
Obviously,
$$  \square_{i=1}^n \rho^i \; (X)\; \leq \; \sum_{i=1}^n \rho^i(X^i)$$
for the allocation defined in equation \eqref{allocation}. Thus, both claims follow from Theorem \ref{thm:main1}.
\end{proof}

\subsection{Proof of Theorem \ref{thm:main2}}\label{proof:main2}

\begin{proof}

Note first that $1=g^i (1-d+\alpha_i) = \widehat{g^i} (1-d)$  for $i=1,2, \dots,n$, thus $g(1)= f(1-d) = \min\{ \widehat{g^1} (1 -d ), \widehat{g^2} (1-d), \dots, \widehat{g^n} (1-d)  \} = 1$.

Observe that the inequality ``$\leq$'' follows from Corollary \ref{cor1}, since $g(1) = 1$ and V@R is cash-invariant. We present a proof for the inequality ``$\geq$''.\fn{A preliminary draft version of this proof can be found in the B.Sc. thesis \ci{Agirman} that was supervised by the author of this paper.} We prove the statement by induction over the number $n$ of distortion functions $g^1, g^2, \dots, g^n$.

Consider first two distortion functions $g^i$, $i=1,2$, i.e. $n=2$. In this case, $d = \alpha_1 + \alpha _2$.
Given any $X^1, X^2 \in L^\infty$ with $X^1 + X^2 =X$ we construct $Y^1, Y^2 \in L^\infty$ such that 
\begin{eqnarray*}
\rho^1 (X^1) + \rho^2 (X^2) 
\; &
\stackrel{(\theequation)\label{a}}
{=}  &
\;
 \rho^{\widehat {g^1}} (Y^1) +\rho^{\widehat {g^2}}  (Y^2) 
\;
\stackrel{\refstepcounter{equation}(\theequation)\label{b}}
{\geq}  
\;
\rho^f (Y^1) + \rho^f (Y^2) \\
\;
&\stackrel{\refstepcounter{equation}(\theequation)\label{c}}
{\geq}   &
\; \quad
\rho^f  (Y^1 + Y^2)  
\;\quad 
\stackrel{\refstepcounter{equation}(\theequation)\label{d}}
{\geq}  
\; \;
\int_{[0,1]} V@R_\lambda (X^1 + X^2) g(d\lambda) .   
\end{eqnarray*}
Observe that by Theorem 6 in \ci{Dhaene2012} we have that $\int V@R_\lambda (X^1 + X^2) g(d\lambda) = \rho^g (X^1 + X^2)$, since $g(1)=1$. 

\noindent We will first specify $Y^1, Y^2$ and then verify the inequalities \eqref{a} -- \eqref{d}.

To begin with, observe that $\tilde X^1 = X^1 - \essinf X^1 \geq 0$, $\tilde X^2 = X^2  -\essinf X^2 \geq 0$. If $\rho^1(\tilde X^1) + \rho^2(\tilde X^2)  \geq \rho^g (\tilde X^1 +  \tilde X^2)$, we add  $ \essinf X^1 + \essinf X^2$ to both sides in order to obtain by cash-invariance that $\rho^1( X^1) + \rho^2( X^2)  \geq \rho^g ( X^1 +   X^2)$. We may thus assume w.l.og. that $X^1, X^2 \geq 0$ and will do so henceforth. Let $U^1$ and $U^2$ be random variables, uniformly distributed on $[0,1]$ such that $X^1 = V@R_{U^1} ( X^1)$ and  $X^2 = V@R_{U^2} ( X^2)$.

Set $Y^1 = X^1 \cdot \1_{\{\alpha_1  \leq U^1  \}} $, $Y^2 = X^2 \cdot \1_{\{ \alpha_2 \leq U^2  \}} $.
Then
$$V@R_\lambda(Y^i) =
\left\{
\begin{array}{ll}
V@R_{\alpha_i + \lambda} (X^i), & \lambda < 1 - \alpha_i ,\\
0, &  1 - \alpha_i \leq   \lambda.
\end{array}
\right.
$$
This implies by Theorem 6 in \ci{Dhaene2012} that 
\begin{eqnarray*}
&& \rho^1 (X^1) + \rho^2 (X^2) = \int V@R_\lambda (X^1) d g_1(\lambda) + \int V@R _ \lambda (X^2) d g _2 (\lambda) \\
& = & \int V@R_{\alpha_1 + \lambda} (X^1) d \hat g_1 (\lambda) + \int V@R_{\alpha_2 + \lambda} (X^2) d \hat g_2 (\lambda) = \int V@R_{\lambda} (Y^1) d \hat g_1 (\lambda) + \int V@R_{ \lambda} (Y^2) d \hat g_2 (\lambda)\\
& = & \rho^{\widehat {g^1}} (Y^1) + \rho^{\widehat {g^2}} (Y^2), \mbox{~i.e.~equation~} \eqref{a}.
\end{eqnarray*}
Now observe that by definition $\widehat {g^i}\geq f$, $i=1,2$. Thus, $ \rho^{\widehat {g^1}} (Y^1) + \rho^{\widehat {g^2}} (Y^2) \geq \rho^f (Y^1) + \rho^f (Y^2)$, i.e. inequality \eqref{b}. Since $f$ is concave, it follows moreover that $\rho^f(Y^1) + \rho^f (Y^2) \geq \rho^f(Y^1 + Y^2)$, i.e. inequality \eqref{c}.

Let $A_i := \{ \alpha_i > U^i  \}$, $i=1,2$. Then $P(A_i) = \alpha_i$, $i=1,2$. Observe that $Y^i=X^i$ on the complement $A_i^c$ of $A_i$, $i=1,2$. For $x\in \bbr$ we get
\begin{eqnarray*}
&& P\{ Y^1 + Y^2 >x\} \geq P(\{ Y^1 + Y^2 >x \} \cap (A_1 \cup A_2)^c)\\
& = & P(\{ X^1 + X^2 >x \} \cap (A_1 \cup A_2)^c) \geq P\{ X^1 + X^2 >x \} - P (A_1 \cup A_2)\\
& \geq & P\{ X^1 + X^2 >x \}  -(\alpha_1 + \alpha_2) = P\{ X^1 + X^2 >x \}  - d.
\end{eqnarray*}
Since $P\{ Y^1 + Y^2 >x\} \geq 0$, we get 
$$P\{ Y^1 + Y^2 >x\} \geq (P\{ X^1 + X^2 >x \}  - d) \vee 0  =  (P\{ X^1 + X^2 >x \}  \vee d ) -d . $$
We have $Y^1 + Y^2 \geq 0$ by construction, thus
\begin{eqnarray*}
&& \rho^f (Y^1 + Y^2) = \int _0^\infty f(P\{ Y^1 + Y^2 >x \}) dx \geq \int_0^\infty f([P\{ X^1 + X^2 >x \}  \vee d] -d) dx\\
&\stackrel{\refstepcounter{equation}(\theequation)\label{e}}{=}& \int _0^\infty g (P\{ X^1 + X^2 >x \}) dx \stackrel{\refstepcounter{equation}(\theequation)\label{f}\refstepcounter{equation}}{=} \rho^g (X^1 + X^2),
\end{eqnarray*}
where we observe for \eqref{e} that $f((y \vee d) -d) = g(y)$ and for \eqref{f} that $X^1 + X^2 \geq 0 $ by assumption. This shows \eqref{d}.

Next, we show that the claim holds for $n+1$ distortion functions, if it holds for up to $n$ distortion functions. Assume that the induction hypothesis is true, and let $g^1, g^2, \dots, g^{n+1}$ be distortion functions with parameters $\alpha_1, \alpha_2, \dots, \alpha_{n+1} \in [0,1)$ and concave active parts. In this case, $d= \sum_{i=1}^{n+1} \alpha_i$. The corresponding distortion risk measures are again denoted by $\rho^1 , \rho^2, \dots, \rho^{n+1}$. Define 
\begin{align*}
f^{(j)} = \min \{   \widehat {g^1}, \widehat {g^2}, \dots, \widehat {g^j} \}, \quad 
d^{(j)} = \sum_{i=1} ^j \alpha_i, \quad 
g^{(j)} (x) 
= &
\left\{
\begin{array}{ll}
0, & 0 \leq x \leq d^{(j)}, \\
f^{(j)} (x - d^{(j)}), & d^{(j)} < x \leq 1 
\end{array}
\right.
\\
&\quad \quad\quad\quad\quad\quad\quad\quad (j=n, n+1)
\end{align*}
Let $X^1, X^2, \dots, X^{n+1} \in L^\infty$ such that $\sum_{i=1}^{n+1} X^i = X$.
Set 
$$
h(x)
= 
\left\{
\begin{array}{ll}
0, & 0 \leq x \leq d, \\
f(x-d), & d < x \leq 1,
\end{array}
\right.
$$
with $f = \min \{ \widehat{ g^{(n)}}, \widehat {g^{n+1}}  \}$. Then, using the induction hypothesis twice, we get that
\begin{equation}\label{eq:lb}
\rho^h(X) \leq  \rho ^{g^{(n)}} (\sum_{i=1}^n X^i ) + \rho^{n+1} (X^{n+1}) \leq \sum_{i=1}^{n+1} \rho^i (X^i).
\end{equation}
Finally note that $d=d^{(n)} + \alpha_{n+1} = \sum_{i=1}^{n+1} \alpha_i$ and $f = \min \{ \widehat {g^{(n)}}, \widehat {g^{n+1}}  \} =  \min \{ \widehat {g^1}, \widehat {g^2}, \dots   ,  \widehat {g^{n+1}}  \} $, thus $h= g^{(n+1)} $. By Theorem 6 in \ci{Dhaene2012} we finally rewrite the left-hand side of equation \eqref{eq:lb} as
$$\rho^h(X)  = \rho^{g^{(n+1)}}(X)  = \int_{[0,1]} V@R_\lambda (X) g^{(n+1)} (d\lambda).$$
This proves the claim. 
\end{proof}

\subsection{Proof of Theorem \ref{thm:small}}\label{proof:small}

\begin{proof}
Due to the cash-invariance of risk measures, we may w.l.o.g suppose that $X\geq 0$. Let $U$ a random variable, uniformly distributed on $[0,1]$ such that $V@R_U(X) = X$. Renumbering the distortion functions and risk measures, we assume that $g^1(1-d+\alpha_1) <1$. For $m \in \bbn$ we set
\begin{eqnarray*}
X^{1,m} & := &  X \cdot \1_{\{U < \alpha_1\}}   + X \cdot \1 _{\{ d \leq U \}} - m \cdot \1 _{\{\alpha_1 \leq U < d\}} , \\
X^{i,m} &  := &  (X+m) \cdot \1_{\{\sum_{l=1}^{i-1} \alpha_l \leq U <  \sum_{l=1}^{i} \alpha_l \}} \quad \quad\quad (i=2,3, \dots,n).
\end{eqnarray*}
Obviously, by construction $\sum_{i=1}^n X^{i,m} =X$.

Since $g^i$ is a distortion function with parameter $\alpha_i>0$ and $X\geq 0$, thus $X+m >0$, we get $$\rho^i(X^{i,m}) =0,  \quad \quad i=2,3, \dots, n.$$
We compute that
$$
V@R_\lambda (X^{1,m})
=
\left\{
\begin{array}{ll}
V@R_\lambda(X), & \lambda < \alpha_1,\\
V@R_{\lambda + d - \alpha_1} (X), & \alpha_1 \leq \lambda < 1 - d + \alpha_1, \\
-m, & 1-d+\alpha_1 \leq \lambda. 
\end{array}
\right.
$$
By Theorem 6 in \ci{Dhaene2012} we obtain 
$$ \rho^1 (X^{1,m}) = \int _{[0,1]}   V@R_\lambda (X^{1,m})  g^1 (d \lambda) = c - m \cdot (1 - g^1 (1-d+\alpha_1))  $$
where the constant $c \geq 0$ is given by
$$c = \int _{[0,1]}  V@R_\lambda (X) \cdot \1 _{[0, \alpha_1)} (\lambda)  + V@R _ {\lambda + d - \alpha_1} (X) \cdot \1_{[\alpha_1, 1 - d+ \alpha_1)} (\lambda)   g^1( d \lambda)  < \infty .$$
By assumption, $1- g^1(1-d+\alpha_1) >0$ , thus $\rho^1(X^{1,m}) \to - \infty$ as $m \to \infty$.
\end{proof}

\subsection{Proof of Theorem \ref{thm:main3}}\label{proof:main3}

\begin{proof}
Using the notation of Theorem \ref{thm:main1}, we define for $i=1,2, \dots,n$ the random variables  $Z^i  = Y \cdot \1_{\{  \sum_{l=1}^{i-1} \alpha_l     \leq U < \sum_{l=1}^i \alpha_l         \}  }$. Then $Z^i \equiv 0$ on the set $\{  V@R_{\alpha_i} (Z^i) \geq Z^i  \}$ and $V@R_{\alpha_i} (Z^i) = 0$.  Thus,
\begin{eqnarray*}
\rho^i(X^i) & = & \rho^i \left(   Z^i + \frac {\essinf X}{n}  \right) \;  = \; \rho^i(Z^i)  \; + \;  \frac {\essinf X}{n} \\
& = & \rho^i\left(    Z^i \cdot \1 _{\{  V@R_{\alpha_i} (Z^i ) \geq Z ^i  \}}  + V@R_{\alpha_i} (Z^i)  \cdot \1 _{\{  V@R_{\alpha_i} (Z^i ) < Z ^i    \}}    \right) \; + \;  \frac {\essinf X}{n} \\
&= & \rho^i (0) \; + \; \frac {\essinf X}{n}
\end{eqnarray*}
This implies that $\sum_{i=1}^n \rho^i(X^i) = \sum_{i=1}^n \rho^i(0) + \essinf X  $.
\end{proof}

\begin{rem}
Instead of proving Theorem \ref{thm:main3} directly, we could also observe that $\rho^i \leq \rho^i (0) + V@R_{\alpha_i}$ and that the inf-convolution is montone increasing with respect to risk measures, and finally apply the corresponding results for $V@R_{\alpha_i}$, $i=1,2, \dots, n$, as given in \ci{embrechtsliuwang2016}. 
\end{rem}

\subsection{Proof of Theorem \ref{thm:smallsimple}}\label{sec:smallsimple}

\begin{proof}
Using the notation of Theorem  \ref{thm:main1}, we define for $i=1, 2, \dots, n-1$ and $m>0$ the random variables 
$$X^{i,m} \; = \;  (Y+m) \cdot \1 _{\{\sum_{l=1}^{i-1} \alpha_l  \leq U <  \sum_{l=1}^{i} \alpha_l\}}   + \frac{\essinf X}{n} $$
and 
\begin{eqnarray*}
X^{n,m } & =  & (Y+m) \cdot \1 _{\{ \sum_{l=1}^{n-1} \alpha_l \leq U <  d\}}   +  Y\cdot \1 _ {\{  d \leq U  \}} + \frac{\essinf X}{n}\\
X^{n+1,m} & = & -m \cdot \1_{\{ d > U  \}}.
\end{eqnarray*}
Since $\rho^1, \rho^2, \dots, \rho^n$ are V@R-type risk measures with parameters $\alpha_1, \alpha_2, \dots, \alpha_n$, we compute
\begin{eqnarray*}
\rho^i(X^{i,m})  & = & \rho^i(0) + \frac{\essinf X}{n}, \quad i=1,2, \dots, n-1,\\
\rho^n(X^{n,m})& = & \rho^n\left(V@R_d (Y) \cdot \1 _{\{ \sum_{l=1}^{n-1} \alpha_l \leq U <  d  \}} + Y \cdot \1 _ {\{   d \leq U  \}} + \frac{\essinf X}{n} \right)\\
&\leq & \rho ^n (0) + V@R_d(Y) + \frac{\essinf X}{n} 
\end{eqnarray*}
Since $\rho^{n+1}$ is strongly surplus sensitive at level $d$, we obtain that
$$ \rho^{n+1} (X^{n+1,m})  \leq  \rho^{n+1} (U - m \cdot \1_{\{V@R_{1-d} (U) \geq U   \}}) \stackrel{m \to \infty }{\longrightarrow} - \infty . $$
Thus,
$$\sum _{i=1}^{n+1} \rho^i(X^{i,m})  \;  \leq \;  \sum_{i=1}^n \rho^i(0) + V@R_d (Y) + \essinf X + \rho^{n+1} (X^{n+1,m})    \; \stackrel{m \to \infty }{\longrightarrow}   \;     - \infty . $$
\end{proof}

\vfill

\pagebreak

\bibliography{bibtex}
\bibliographystyle{jmr}

\end{document}